\definecolor{keywordcolor}{rgb}{0.7, 0.1, 0.1}   % red
\definecolor{commentcolor}{rgb}{0.4, 0.4, 0.4}   % grey
\definecolor{symbolcolor}{rgb}{0.0, 0.1, 0.6}    % blue
\definecolor{sortcolor}{rgb}{0.1, 0.5, 0.1}      % green
\definecolor{shadecolor}{rgb}{0.98, 0.98, 0.98}  % very light grey
\providecommand{\href}[2]{#2}
\providecommand{\texorpdfstring}[2]{#1}
\providecommand*{\backref}{}
\providecommand*{\backrefalt}{}
\renewcommand*{\backref}[1]{}
\renewcommand*{\backrefalt}[4]{%
	\ifcase #1 %
	\or
	  Cited page~#2.
	\else
	  Cited pages~#2.
	\fi
}
\newcommand\MTkillspecial[1]{% helper macro
  \bgroup
  \catcode`\&=9
  \let\\\relax%
  \scantokens{#1}%
  \egroup
}
\newcommand\DeclarePairedDelimiterMultiline[3]{
  \DeclarePairedDelimiter{#1}{#2}{#3}
  \reDeclarePairedDelimiterInnerWrapper{#1}{star}{
    \mathopen{##1\vphantom{\MTkillspecial{##2}}\kern-\nulldelimiterspace\right.}
    ##2
    \mathclose{\left.\kern-\nulldelimiterspace\vphantom{\MTkillspecial{##2}}##3}}
}
\newcommand{\frakg}{\mathfrak{g}}
\newcommand{\boN}{\mathcal{N}}
\newcommand{\bbk}{\mathbb{k}}
\newcommand{\Q}{\mathbb{Q}}
\newcommand{\Z}{\mathbb{Z}}
\newcommand{\N}{\mathbb{N}}
\newcommand{\R}{\mathbb{R}}
\newcommand{\C}{\mathbb{C}}
\newcommand{\dd}{\mathop{}\!\mathrm{d}}
\DeclarePairedDelimiterMultiline{\abs}{\lvert}{\rvert}
\DeclarePairedDelimiterMultiline{\norm}{\lVert}{\rVert}
\DeclarePairedDelimiterMultiline{\pare}{(}{)}
\DeclareMathOperator{\id}{id}
\renewcommand{\epsilon}{\varepsilon}
\renewcommand{\phi}{\varphi}
\renewcommand{\leq}{\leqslant}
\newcommand{\boL}{\mathcal{L}}
\newtheorem{thm}{Theorem}[section]
\newtheorem{prop}[thm]{Proposition}
\newtheorem{defn}[thm]{Definition}
\newtheorem*{prop*}{Proposition}
\theoremstyle{definition}
\newtheorem{ex}[thm]{Example}
\newtheorem{rmk}[thm]{Remark}
\newtheorem{attempt}[thm]{Attempt}
\numberwithin{equation}{section}
\newcommand{\mathlib}{\texttt{mathlib}\xspace}
\keywords{Formalization, Lean, Mathlib, differential calculus, smoothness classes, analytic functions}
\title{Higher order differential calculus in mathlib}
\author{Sébastien Gouëzel}
\address{IRMAR, CNRS UMR 6625,
Université de Rennes, 35042 Rennes, France}
\email{sebastien.gouezel@univ-rennes1.fr}
\date{\today}
\begin{document}

\begin{abstract}
We report on the higher-order differential calculus library developed
inside the Lean mathematical library \mathlib. To support a broad range of
applications, we depart in several ways from standard textbook definitions:
we allow arbitrary fields of scalars, we work with functions defined on
domains rather than full spaces, and we integrate analytic functions in the
broader scale of smooth functions. These generalizations introduce
significant challenges, which we address from both the mathematical and the
formalization perspectives.
\end{abstract}

\maketitle

\section{Introduction}

Calculus is a basic and central topic in mathematics, often first studied for
functions from $\R$ to $\R$, then from $\R^n$ to $\R^m$, then for functions
between normed vector spaces -- where the added abstraction and generality
are needed for many important applications. The basic idea of calculus is to
approximate locally a function by a linear map, its differential. Often, this
is not enough, and one should approximate a function by a polynomial or,
equivalently, study iterated differentials, i.e., the differential of the
differential, and so on. This is higher order calculus.

In this article, we describe how higher order calculus has been formalized in
the \mathlib library~\cite{mathlib} for the Lean proof
assistant~\cite{demoura_lean}, insisting on the motivation for several design
choices that differ from mainstream definitions in classical textbooks. In a
nutshell, these choices stem from the broad range of applications that our
formalization should make possible, from Lie groups over $p$-adic fields to
Schwartz spaces for distribution theory.

First-order differential calculus is the study of derivatives of maps. In
this respect, the main textbook definition is the following:
\begin{defn}
\label{defn:derivative} Let $E$ and $F$ be two real normed spaces. A map $f :
E \to F$ is differentiable at a point $x \in E$ if there exists a continuous
linear map $L$ from $E$ to $F$ such that, as $y$ tends to $x$,
\begin{equation*}
  f(y) = f(x) + L (y - x) + o(y - x).
\end{equation*}
\end{defn}
The $o()$ notation means here that $\norm{f(y) - f(x) - L (y-x)} /
\norm{y-x}$ tends to zero as $y$ tends to $x$ while being different from $x$.

When $f$ is differentiable at a point $x$, the linear map $L$ from the
definition is unique. It is the (Fréchet) \emph{differential} (or
\emph{derivative}) of $f$ at $x$, and is often denoted by $Df(x)$ or $D_x f$
or $f'(x)$.

First-order differential calculus has been formalized by Avigad in \mathlib
essentially by following the Isabelle/HOL implementation described
in~\cite[Section~5.2]{analysis_HOL}. While this is not the main focus of this
article, we will recap in Section~\ref{sec:first-order} some features of this
implementation. See also~\cite[Section~4.5]{Affeldt_Cohen_Rouhling_2018} for
a formalization of derivatives in the MathComp Analysis library for Rocq.

For notions of higher-order differentiability, we will follow for instance
the reference textbook~\cite[Section 5.3]{cartan_differential}. Let us denote
by $\boL(E, F)$ the space of continuous linear maps from $E$ to $F$, when $E$
and $F$ are real normed vector spaces. With the operator norm, $\boL(E, F)$
is also a real normed vector space. Informally, a function has $n$ degrees of
smoothness if it has $n$ successive derivatives which are all continuous. We
will say in this case that the function is $C^n$. More formally, the standard
definition of higher-order differentiability is the following inductive
definition.
\begin{defn}
\label{def:Cn} Let $E$ and $F$ be two real normed spaces, and let $f: E \to
F$. We say that $f$ is $C^0$ if it is continuous. For a natural number $n>0$,
we say that $f$ is $C^n$ if it is differentiable and its derivative $Df : E
\to \boL(E, F)$ is $C^{n-1}$. We say that $f$ is $C^\infty$ if it is $C^n$
for all $n \in \N$.
\end{defn}

When $f$ is $C^2$, its second derivative $D^2 f(x)$ is well defined, as an
element of $\boL(E, \boL(E, F))$. There is a canonical (uncurrying)
identification between $\boL(E, \boL(E, F))$ and the space of continuous
bilinear maps from $E \times E$ to $F$, so $D^2 f(x)$ is usually rather seen
as a continuous bilinear map. In the same way, $\boL(E, \boL(E, \dotsb
\boL(E, F)\dotsb)$ is identified with the space of continuous multilinear
maps from $E^n$ to $F$, so the iterated $n$-th derivative of $f$, denoted
$D^n f(x)$, is usually seen as a continuous multilinear map.

\medskip

\mathlib is an integrated mathematics library, whose ultimate goal is to be
able to formalize all current mathematical research. Formalization of notions
in \mathlib should therefore be designed to allow all important applications
down the line. A central application of higher order calculus is the theory
of manifolds. Among the different kinds of manifolds encountered in the
literature, let us mention manifolds with boundary (they are basic blocks in
Thurston's program to classify $3$-dimensional manifolds, solved in the
positive by Perelman and giving Poincaré conjecture in dimension $3$),
manifolds over the $p$-adic numbers (especially important to define $p$-adic
Lie groups), Banach manifolds (key to study groups of diffeomorphisms of
compact real manifolds). For these applications, Definition~\ref{def:Cn} is
not enough. Let us highlight the limitations of this definition.
\begin{enumerate}
\item It is only given over the real numbers. For $p$-adic Lie groups, one
    needs higher-order differential calculus over an arbitrary field.
\item It is only written in the whole space (or, by a straightforward
    generalization, open sets of the vector space). To study manifolds with
    boundary, we need a notion of smooth functions in a closed half-space
    (or in a closed quadrant for manifolds with corners). More generally,
    we should define $C^n$ functions on subsets of vector spaces.
\item \label{issue_analytic} Calculus over normed fields different from
    $\R$ or $\C$ is typically only well behaved for analytic functions (see
    Section~\ref{sec:symm} for more on this). For this reason,
    Bourbaki~\cite{bourbaki_diff} studies manifolds of class $C^n$ for
    $n\in \N \cup\{\infty, \omega\}$ (where $C^\omega$ stands for analytic
    functions), and assumes when the field is not $\R$ or $\C$ that
    $n=\omega$, i.e., that the coordinate changes are analytic. To get a
    unified theory of manifolds over various fields, without code
    duplication, it is therefore necessary to have a single definition of
    $C^n$ functions for $n\in \N \cup\{\infty, \omega\}$.
\end{enumerate}
A suitable definition of $C^n$ functions in \mathlib should solve
simultaneously these three difficulties.

First-order differential calculus has been implemented in \mathlib by Avigad
in 2019, following~\cite{analysis_HOL}. From 2019 to 2025, the author has
brought $C^n$ functions to \mathlib, fixing successively several design flaws
until the current definition, which solves the three issues above. The
definition is now stable. It has proven its robustness, as testified by the
multiple applications notably to manifold theory but also to Fourier analysis
and distributions.

The current paper serves two goals. It is an entry point to people willing to
use $C^n$ functions in \mathlib, describing the main API. It is also a
justification of the design, explaining where more naive approaches would
fail through several mathematical examples highlighting unexpected
subtleties. This can be interesting to mathematicians or to formalizers using
other systems.

\begin{rmk}
There is another standard definition of $C^n$ functions in terms of partial
derivatives $\partial^n f / \partial x_1^{\alpha_1} \dotsm \partial
x_d^{\alpha_d}$ with $\alpha_1 + \dotsb + \alpha_d = n$, for $f : \R^d \to
F$. This definition does not work in infinite dimension, and depends on the
fact that the source space is precisely of the form $\R^d$ (this could be
avoided by specifying instead a finite basis of the source vector space, or
by considering partial derivatives in all directions). These shortcomings
make it worse, both mathematically and from a formalization perspective, than
Definition~\ref{def:Cn}, although it may seem appealing at first as it
requires less algebra (notably, no discussion of spaces of multilinear maps).
This definition would not allow the study of Banach manifolds, or of calculus
in infinite-dimensional Hilbert spaces which is key in quantum mechanics.

Partial derivatives may be recovered from the full iterated derivative.
Denoting by $e_i$ the $i$-th basis vector, then by definition
\begin{equation*}
  \frac{\partial^n f (x)}{\partial x_1^{\alpha_1} \dotsm \partial x_d^{\alpha_d}}
  = D^n f(x)(\underbrace{e_1, \dotsc, e_1}_{\alpha_1\text{ terms}},
  \underbrace{e_2, \dotsc, e_2}_{\alpha_2\text{ terms}}, \dotsc, \underbrace{e_d,\dotsc, e_d}_{\alpha_d\text{ terms}}).
\end{equation*}
We will not use this point of view in this
paper, as it is notationally and combinatorially more cumbersome than the
conceptual approach of Definition~\ref{def:Cn} while applying in more
restricted situations.

Since there are no dependent types in Isabelle/HOL, it is complicated to work
with the space of $n$-multilinear maps from $E^n$ to $F$, for varying $n$, so
Immler and Zhan have to resort to this kind of definition of higher
smoothness, based on partial derivatives, in~\cite{immler_zhang}.
\end{rmk}

\begin{rmk}
\label{rmk_one_dim} There is an even simpler definition of $C^n$ functions
when the source space is the field $\bbk$ itself. In this case, the Fréchet
derivative at a point is a continuous linear map from $\bbk$ to $F$. There is
a canonical identification between $\boL(\bbk, F)$ and $F$, mapping $L$ to
$L(1)$. Through this identification, we recover the elementary notion of the
one-dimensional derivative of a function $f : \bbk \to F$, as a function $f'
: \bbk \to F$. This process can be iterated as the types do not change,
yielding the $n$-th one-dimensional derivative $f^{(n)} : \bbk \to F$. A
function is $C^n$ in this context if all functions $f^{(m)}$ for $m \le n$
are well defined and continuous. In the Mathcomp Analysis Library for Rocq,
the iterated derivatives of functions are currently available in this
one-dimensional setting.
\end{rmk}

\begin{rmk}
There are a lot of other scales of smooth-like functions. For instance, in
$\R^n$, taking into account integrability properties of the function and its
successive derivatives with respect to Lebesgue measure (and/or using Fourier
transform), one can define the Sobolev and Besov spaces. One can also index
the smoothness by a nonnegative real number, by adding Hölder controls on the
iterated derivatives. One can also consider definitions in terms of iterated
differences operators instead of iterated derivatives, which turn out to work
better over ultrametric fields as advocated in~\cite{ultrametric_calculus}
(solving for instance issues such as Proposition~\ref{prop:non_symm} below).
We will not explore these other classes in this paper, and focus only on the
formalization of a class based on Definition~\ref{def:Cn} as this is the most
fundamental in the literature.
\end{rmk}

\section{First-order calculus}
\label{sec:first-order}

As in Isabelle/HOL, the derivatives of functions in \mathlib are defined for
any normed field, on domains. The main predicate is
\begin{lstlisting}
def HasFDerivWithinAt (f : E → F) (f' : E →L[𝕜] F) (s : Set E)
    (x : E) :=
  (fun y ↦ f y - f x - f' (y - x)) =o[𝕜; 𝓝[s] x] (fun y ↦ y - x)
\end{lstlisting}
(The letter \texttt{F} in \texttt{HasFDerivWithinAt} stands for
\emph{Fréchet}, to distinguish it from the one-dimensional derivative
discussed in Remark~\ref{rmk_one_dim}.) In this definition, $\mathbb{k}$ is a
normed field, $E$ and $F$ are two normed vector spaces over $\mathbb{k}$, and
$E \to\!\! L[\bbk] F$ denotes the space of continuous linear maps from $E$ to
$F$, and $\boN [s] x$ is the filter of neighborhoods of $x$ within the set
$s$. This is the direct analogue of Definition~\ref{def:Cn}, except that the
field of scalars does not have to be $\R$, and moreover this definition is
within a domain, i.e., we only require that $f(y) = f(x) + f' (y-x) + o(y-x)$
when $y$ belongs to the set $s$. Using domains is important for the
application to manifolds with boundary, but also for more elementary
statements like the fundamental theorem of calculus, where one considers
derivatives inside intervals.

(The definition is not written exactly like this in \mathlib as it involves
another predicate \texttt{HasFDerivAtFilter}, but it is definitionally equal
to what is written above).

Then we have the predicate \lstinline{DifferentiableWithinAt 𝕜 f s x}
registering that there exists a continuous linear map $f'$ such that
\lstinline{HasFDerivWithinAt f f' s x}. Such a value for the derivative is
recorded in \lstinline{fderivWithin 𝕜 f s x}. This function is total, i.e., it
is also defined when the function is not differentiable, for ease of use. We
give the value $0$ to the derivative in this case by convention, and also
because this choice of junk value ensures that many lemmas remain true even
for non-differentiable functions. Here is the formal definition:
\begin{lstlisting}
def fderivWithin (𝕜) (f : E → F) (s : Set E) (x : E) : E →L[𝕜] F :=
  if HasFDerivWithinAt f (0 : E →L[𝕜] F) s x
    then 0
  else if h : DifferentiableWithinAt 𝕜 f s x
    then Classical.choose h
  else 0
\end{lstlisting}
In general domains (contrary to open sets), the derivative is not unique,
hence the definition resorts to \lstinline{Classical.choose} to choose some
value of the derivative. Using the value $0$ as the derivative whenever
possible, as above, is convenient in several contexts -- for instance, it
ensures that the derivative in a domain of a constant function is zero, while
otherwise the derivative of a constant function within the empty set could be
arbitrary for instance.

\mathlib also contains specializations of these notions when $s$ is the whole
space, called respectively \lstinline{HasFDerivAt f f' x} and
\lstinline{DifferentiableAt 𝕜 f x} and \lstinline{fderiv 𝕜 f x}. It also
contains predicates \lstinline{DifferentiableOn 𝕜 f s} recording that $f$ is
differentiable within $s$ at each point of $s$, and
\lstinline{Differentiable 𝕜 f} recording that $f$ is differentiable
everywhere. A comprehensive API is then given about these notions, including
standard facts such that the differentiability of the sum or the composition
of differentiable functions.

There is a subtlety in domains. It is true that, if $f$ and $g$ have
derivatives respectively $f'$ and $g'$ at $x$ within $s$, then $f + g$ has
derivative $f' + g'$ at $x$ within $s$. In the same way, if $f$ and $g$ are
differentiable at $x$ within $s$, then $f + g$ also is. However, even if $f$
and $g$ are differentiable at $x$ within $s$, it is \emph{not} true in
general that
\begin{lstlisting}
fderivWithin 𝕜 (f + g) s x = fderivWithin 𝕜 f s x + fderivWithin 𝕜 g s x
\end{lstlisting}
because of the lack of uniqueness of derivatives within sets. For instance,
assume that $E = \R^2$ and $F= \R$, let $L$ be the linear map sending $(x_1,
x_2)$ to $x_1$ and $M$ the linear map sending $(x_1, x_2)$ to $x_1+x_2$. Take
$s=\R\times \{0\}$. The function $L$ admits itself as a derivative
everywhere, and in particular along $s$, and the same holds for $M$. However,
as $L$ and $M$ coincide on $s$, the map $L$ also admits $M$ as a derivative
along $s$. It is therefore possible that the functions $f=g=L$ have an
\texttt{fderivWithin} along $s$ equal to $L$, while $f+g=2L$ may have an
\texttt{fderivWithin} along $s$ equal to $2M$, violating the above equality.

It is sometimes useful to ensure that the above equality holds. For this, the
domain should be nice enough to enforce uniqueness of derivatives. This is
true when the tangent directions of $s$ at $x$ span a dense subset of the
whole space. We define a predicate \lstinline{UniqueDiffWithinAt 𝕜 s x}
registering this property, as well as \lstinline{UniqueDiffOn 𝕜 s} registering
that $s$ is a set of unique differentials around each of its points. With
these definitions at hand, the fact that the derivative of the sum is the sum
of the derivatives reads for instance
\begin{lstlisting}
theorem fderivWithin_add {f g : E → F} {x : E} {s : Set E}
    (hxs : UniqueDiffWithinAt 𝕜 s x)
    (hf : DifferentiableWithinAt 𝕜 f s x)
    (hg : DifferentiableWithinAt 𝕜 g s x) :
    fderivWithin 𝕜 (f + g) s x = fderivWithin 𝕜 f s x + fderivWithin 𝕜 g s x
\end{lstlisting}

For whole derivatives (i.e., when $s$ is the whole space), no further
conditions are needed as the whole space (or open sets) are sets of unique
differentials. This corresponds to the uniqueness of derivatives mentioned
after Definition~\ref{defn:derivative}. More generally, convex sets with
nonempty interior are also sets of unique differentials. This applies for
instance to the half-space or the quadrant, which turns out to be important
in manifold theory.

\section{Analytic functions}
\label{sec:analytic}

Our framework for $C^n$ functions has to contain analytic functions, as
explained on Page~\pageref{issue_analytic}. In this paragraph, we describe
our formalization of analytic functions in \mathlib, as a prerequisite to
formalize $C^n$ functions.

A one-dimensional function $f$ is analytic around $0$ if it can be written as
a convergent sum $f(z) = \sum_n c_n z^n$. This generalizes readily: a
function $f:\R^d \to \R$ is analytic at $0$ if it can be written around $0$
as
\begin{equation*}
  f(x_1, \dotsc, x_d)=\sum_\alpha c_\alpha x_1^{\alpha_1} \dotsm x_d^{\alpha_d},
\end{equation*}
where the sum is over all multiindices $\alpha=(\alpha_1, \dotsc, \alpha_d)$
and is required to converge in norm on a neighborhood of $0$ (equivalently,
the coefficients $c_\alpha$ should satisfy a bound $\abs{c_\alpha} \le C
R^{\alpha_1 + \dotsb+\alpha_d}$). However, this definition is restricted to
finite dimension. The correct general definition is the following (see
e.g.~\cite{mujica_infinite_dim_complex_analysis}):
\begin{defn}
\label{defn:analytic} Let $E$ and $F$ be two normed vector spaces over a
normed field $\bbk$. A function $f : E \to F$ is analytic at $x$ if there
exists a sequence $p_n$, where $p_n : E^n \to F$ is a continuous
$n$-multilinear map satisfying $\norm{p_n} \leq C R^n$ for some $C>0$ and
$R<\infty$, such that for all $y$ close enough to $0$ one has
\begin{equation}
\label{eq:analytic_sum_eq}
  f(x + y) = \sum_n p_n(y, \dotsc, y).
\end{equation}
\end{defn}
In one dimension, $p_n(y^{(1)},\dotsc, y^{(n)})$ is just $c_n y^{(1)} \dotsm
y^{(n)}$. In the finite-dimensional situation, the term $p_n(y,\dotsc, y)$ is
the sum of the terms $c_\alpha y_1^{\alpha_1} \dotsm y_d^{\alpha_d}$ over all
multiindices with $\alpha_1 + \dotsb + \alpha_d = n$.

To formalize this definition, we define first the type %
\lstinline{E [×n]→L[𝕜] F} of continuous $n$-multi\-linear maps from $E^n$ to
$F$, with its canonical normed space structure over $\bbk$ (where $E$ and $F$
are normed spaces over $\bbk$). Let %
\lstinline{FormalMultilinearSeries 𝕜 E F} be the space of sequences $p_n$ with
$p_n$ a continuous $n$-multilinear map. We introduce a structure recording
that a function admits a power series expansion as in
Definition~\ref{defn:analytic}, on a ball of radius $r$ intersected with a
set $s$ as we want definitions in domains:
\begin{lstlisting}
structure HasFPowerSeriesWithinOnBall (f : E → F)
    (p : FormalMultilinearSeries 𝕜 E F) (s : Set E)
    (x : E) (r : ℝ≥0∞) : Prop where
  r_le : r ≤ p.radius
  r_pos : 0 < r
  hasSum : ∀ y, x + y ∈ insert x s → y ∈ ball (0 : E) r
    → HasSum (fun n ↦ p n (fun _ : Fin n ↦ y)) (f (x + y))
\end{lstlisting}
In this definition, \lstinline{ℝ≥0∞} is the type of extended nonnegative
reals, i.e., $[0,+\infty]$, and \lstinline{p.radius} is the radius on which
the power series defined by $p$ is converging. The condition \texttt{hasSum}
requires the condition~\eqref{eq:analytic_sum_eq} for $x+y \in s \cup \{x\}$
instead of just $s$ as the theory is very bad otherwise, and in interesting
applications $x$ belongs to $s$ anyway.

Once this predicate is available, we say that $f$ is analytic within $s$ at
$x$ if there exist $p$ and $r>0$ such that $f$ has a power series given by
$p$ on the ball of radius $r$. Formally,
\begin{lstlisting}
def AnalyticWithinAt (𝕜) (f : E → F) (s : Set E) (x : E) :=
  ∃ p : FormalMultilinearSeries 𝕜 E F, ∃ r, HasFPowerSeriesWithinOnBall f p s x r
\end{lstlisting}
Note how this predicate is similar to %
\lstinline{DifferentiableWithinAt 𝕜 f s x}, although it requires a much
stronger smoothness property. In parallel to the differentiability situation,
we introduce further predicates \lstinline{AnalyticOn} recording that a
function is analytic within $s$ at every point of $s$, and
\lstinline{AnalyticAt} recording that a function is analytic at a point
(which corresponds to $s$ being the whole space). We also provide a
comprehensive API for these notions.

The sequence $p_n$ is not uniquely determined by $f$: any sequence $p'_n$
with $p_n(y,\dotsc, y) = p'_n(y,\dotsc, y)$ will also do provided that its
norm does not grow superexponentially.
\begin{ex}
Let $\bbk$ be a normed field and let $f : \bbk^2 \to \bbk$ be given by
$f(x_1, x_2) = x_1 x_2$. Let $p_n = 0$ for $n\ne 2$ and $p_2((x_1, x_2),
(x'_1, x'_2)) = x_1 x'_2$. Then $f(y) = \sum_n p_n (y, \dotsc, y)$, showing
that $f$ is analytic around $0$. The sequence $p'_n$ given by $p'_n=0$ for
$n\ne 2$ and $p'_2((x_1, x_2), (x'_1, x'_2)) = x'_1 x_2$ also satisfies $f(y)
= \sum_n p'_n (y, \dotsc, y)$.

If one wanted a canonical choice for $p_n$, one could try to enforce
symmetry: in this example, one would set
\begin{equation*}
  p_2((x_1, x_2),
  (x'_1, x'_2)) = (x_1 x'_2 + x'_1 x_2) / 2.
\end{equation*}
However, this is not possible if $\bbk$ is of characteristic $2$, for
instance $\bbk = \mathbb{F}_2((t))$. In this case, there is no symmetric
representative, contrary to characteristic $0$ where such a canonical choice
is always possible.
\end{ex}

If $f$ is analytic around a point $x$, and $y$ is small enough, then $f$ is
also analytic around $x+y$ if the target space is complete. In particular,
the set of analyticity points of $f$ is open. This follows from the following
computation:
\begin{align*}
  f(x+y+z) &= \sum_n p_n(y+z, \dotsc, y+z)
  \\& = \sum_n \sum_{I \subseteq \{0,\dotsc, n-1\}} \sum p_n(w^I_1, \dotsc, w^I_n)
\end{align*}
by multilinearity, where $w^I_i$ is equal to $z$ if $i\in I$, and $y$
otherwise. Varying only the coordinates inside $I$, one can see the latter
expression as a $\#I$-multilinear map. Grouping together the terms with the
same number of variables, one gets $f(x+y+z) = \sum q_k(z, \dotsc, z)$ where
$q_k$ is $k$-multilinear. To convert this into a rigorous proof, one should
check the convergence of the series defining $q_k$, which follows from norm
estimates and the completeness of the target space.

The main difficulty here is to write down things precisely, as is often the
case when the paper proof involves a lot of ellipses. The norm controls are
then straightforward albeit tedious.

If $f$ is analytic at $x$ (given by the series of the $p_n$), then $f$ is
also differentiable at $x$, with derivative equal to $p_1$ (modulo the
identification between continuous $1$-multilinear maps and continuous linear
maps). This follows from the series expression and the fact that, for $n \ge
2$, then $p_n(y, \dotsc, y) = O(\norm{y}^n) = o(\norm{y})$ (together with
uniform norm controls). This is formalized as
\begin{lstlisting}
theorem HasFPowerSeriesAt.hasFDerivAt
    (h : HasFPowerSeriesAt f p x) :
    HasFDerivAt f
      (continuousMultilinearCurryFin1 𝕜 E F (p 1)) x

theorem AnalyticAt.differentiableAt
    (h : AnalyticAt 𝕜 f x) :
    DifferentiableAt 𝕜 f x
\end{lstlisting}
Versions within a set are also given in the library.

\begin{ex}
\label{ex:bad_analytic} Here is an example showing that completeness of the
target space is necessary for the openness of the set of analyticity points.
Let $\ell^\infty(\R)$ be the space of bounded real sequences, with the
supremum norm. Define $f : \R \to \ell^\infty(\R)$ as follows: for $\abs{x}<
1$, let $f(x) = (1, x, x^2, \dotsc)$. Otherwise, let $f(x)=0$.

The function $f$ is analytic at $0$. Indeed, for $\abs{x}<1$, we have $f(x) =
\sum c_n x^n$ where $c_n \in \ell^\infty(\R)$ is the sequence with a $1$ at
index $n$ and zeroes elsewhere. As $\ell^\infty(\R)$ is complete, if follows
that $f$ is analytic at any $x$ with $\abs{x}<1$, and therefore
differentiable. Its derivative is $\sum n c_n x^{n-1}$.

Let $F$ be the vector subspace of $\ell^\infty(\R)$ spanned by the $c_n$'s
and the values of $f(x)$ for $x\in \R$. This space is not complete any more.
As $f$ takes its values in $F$, one can define a new function $g : \R \to F$
equal to $f$. The expansion $g(x) = \sum c_n x^n$ is still true in $F$, so
$g$ is analytic at $0$. However, we claim that $g$ is not analytic at any
other point $x$ with $\abs{x}<1$, showing that the set of analyticity points
of $g$ is not open.

Assume by contradiction that $g$ is analytic at $x$. Then $g$ is
differentiable at $x$. Therefore, the derivative $\sum_n n c_n x^{n-1}$
belongs to $F$. By definition of $F$, it is the sum of a finitely supported
sequence, and a finite sum $\sum_i a_i (1, y_i, y_i^2, \dotsc)$. Looking at
the $n$-th coefficient of the sequence, we get $n x^{n-1} = \sum_i a_i y_i^n$
for large $n$. Considering even values of $n$, we can even assume that all
the $y_i$ are nonnegative, replacing $y_i$ by $-y_i$ if necessary. Then the
growth rate of the right hand term is $c z^n$ where $z$ is the largest of the
$y_i$ with a nonzero coefficient. This is a contradiction as this is supposed
to grow like $n x^{n-1}$.

In this example, the function $g$ is only differentiable at $0$. Its
derivative there is $x \mapsto c_1 x$. At other points, it is not
differentiable. With our convention for the derivative, the derivative will
then be zero at these other points. This implies that the derivative of $g$
is not continuous at $0$, albeit $g$ is analytic there!
\end{ex}

Among the standard results on analytic functions that we formalize, we prove
that the composition of analytic functions is analytic, and that the local
inverse of an analytic function is analytic, i.e., the analytic version of
the inverse function theorem. Defining a sequence $q_n$ for the inverse of a
map $f$ corresponding to a sequence $p_n$ is easy formally, but checking that
the growth of $\norm{q_n}$ is not faster than any exponential requires a
further argument. Standard textbooks use complex analysis and the Cauchy
integral formula to control this growth. However, the development of analytic
functions in \mathlib comes \emph{before} complex analysis (which uses a lot
of material on analytic functions). Using it for the proof of the analytic
inverse function theorem would therefore require a lot of intermingling
between the two theories, harming the internal coherence of \mathlib.
Instead, we have devised a direct proof, of a more combinatorial nature and
relying on a tricky inductive estimate originating in~\cite{poeschel}.

All these constructions rely heavily on the finite sums and infinite sums
implementations in \mathlib, as well as results about completeness, which
were already well developed before the start of this project.
%
%\medskip
%
%A subclass of analytic functions is well-behaved even when the target space
%is not complete: those functions that can be written as $f(x + y) = \sum_n
%p_n(y, \dotsc, y)$ with $p_n=0$ for large enough $n$. This subclass contains
%several interesting examples like linear, bilinear or multilinear functions.
%It has been formalized in \mathlib by Sophie Morel, under the name of
%continuously polynomial functions \lstinline{CPolynomialOn}, based on a
%predicate \lstinline{HasFiniteFPowerSeriesOnBall} extending
%\lstinline{HasFPowerSeriesOnBall} by requiring the vanishing of $p_n$ for
%large enough $n$.

\section{\texorpdfstring{$C^n$}{C\^n} functions}

\subsection{The formal definition}

In this paragraph, we give the definition of $C^n$ functions in \mathlib. As
the definition is more complicated than one might expect naively, we start
with a discussion of several failed attempts to show the difficulties. We
will only talk about $C^n$ functions for $n \le \infty$ at first, and add
analytic functions later on.

\begin{attempt}
As a first attempt, we could follow literally Definition~\ref{def:Cn}: define
inductively the $n$-th iterated derivative of $f$ within the set $s$, as the
derivative within $s$ of the $(n-1)$-th iterated derivative, and say that $f$
is $C^n$ within $s$ if all these iterated derivatives are well defined and
continuous up to stage $n$.

The non-uniqueness of derivatives in domains implies that this strategy
fails: with this definition, one can not prove that continuous linear maps
are $C^n$ on domains, while this should hold for any reasonable definition.

For instance, consider the function $f : (x_1, x_2) \mapsto x_1$ defined on
$\R^2$, and the domain $s = \R \times \{0\}$. Any linear map $\ell_\alpha$ of
the form $(u, v) \mapsto u + \alpha v$ is a derivative of $f$ along $s$, as
only the
horizontal direction matters. It follows that %
\lstinline{fderivWithin ℝ f s x}, as defined above using choice, might be any
function $\ell_{\alpha(x)}$. If $\alpha$ is not continuous on $s$, then $f$
would not even be $C^1$ on $s$.
\end{attempt}

This failed attempt shows that a reasonable definition should not be in terms
of the specific derivative %
\lstinline{fderivWithin ℝ f s x} (which might be badly behaved) but instead in
terms of the existence of a nice sequence of derivatives. Using the type %
\lstinline{FormalMultilinearSeries 𝕜 E F} of sequences $p_n$ of
$n$-multi\-linear maps, we record in %
\lstinline{HasFTaylorSeriesUpToOn n f p s} the fact that $p_0$ coincides with
$f$ on $s$, and $p_{m+1}$ is the derivative of $p_m$ for any $m<n$, and $p_n$
is continuous. This would make it possible to say that a function is $C^n$ if
it admits a Taylor series of order $n$ (except that in reality we will again
need to adapt this definition).

The formalized definition is the following:
\begin{lstlisting}
structure HasFTaylorSeriesUpToOn
    (n : WithTop ℕ∞) (f : E → F)
    (p : E → FormalMultilinearSeries 𝕜 E F) (s : Set E) : Prop where
  zero_eq : ∀ x ∈ s, (p x 0).curry0 = f x
  fderivWithin : ∀ m : ℕ, m < n → ∀ x ∈ s,
    HasFDerivWithinAt (p · m) (p x m.succ).curryLeft s x
  cont : ∀ m : ℕ, m ≤ n → ContinuousOn (p · m) s
\end{lstlisting}
In this definition, \lstinline{WithTop ℕ∞} is the set $\N$ with two added
elements $\infty$ and $\omega$.  We include these two added elements to
anticipate over the definition of $C^n$ for $n \in \N \cup \{\infty,
\omega\}$. The \lstinline{F} in the name stands for ``Fréchet'', to emphasize
that this is not a one-dimensional notion. The property \lstinline{zero_eq}
essentially requires that $f$ and $p_0(x)$ coincide on $s$. However, $f (x)$
belongs to $F$ while $p_0(x)$ is a function from $E^0$ to $F$, so they can
not coincide formally: to get a sensible statement, one needs to add the
canonical identification between $F$ and $E^0 \to F$, which we may see as a
curryfication step. In the same way, the property \lstinline{fderivWithin}
requires essentially that $p_{m+1}(x)$ is a derivative of $p_m$ at $x$,
except that $p_{m+1}$ is a multilinear map $E^{m+1} \to F$ while the
derivative of $p_m$ is a linear map from $E$ to multilinear maps from $E^m$
to $F$. Again, these two sets of functions are canonically identified through
a curryfication step, written as \lstinline{curryLeft} in the definition.
Finally, we should also require the continuity of $p_n$, except that it does
not make sense for $n=\infty$ or $\omega$. Therefore, \lstinline{cont}
requires instead continuity of all $p_m$ for $m\le n$, although the new
information is only for the last term if there is one: by
\lstinline{fderivWithin} we already know that the other terms are
differentiable, hence continuous.

\begin{rmk}
Formal multilinear series, i.e., sequences $(p_n)$ of continuous
$n$-multilinear maps, show up both in the definition of analytic functions as
$f(x+y) = \sum p_n(y, \dotsc, y)$ in Definition~\ref{defn:analytic}, and in
sequences of iterated derivatives in Taylor series. These two uses of the
same formal object do \emph{not} correspond to each other. For instance,
consider the real function $f(x) = 1/(1-x)$ around $0$. It is a analytic at
$0$, with expansion $f(x) = \sum x^n$, so $p_n(y, \dotsc, y) = y^n$. The
$n$-th derivative of $f$ at $0$, however, is $D^nf(0)(v_1, \dotsc, v_n) = n!
v_1 \dotsm v_n$ (corresponding to the statement in terms of $1$-dimensional
derivatives that $f^{(n)}(0) = n!$). There is a discrepancy of $n!$ between
the analytic function point of view and the derivatives point of view. This
is not surprising given the standard expansion of a one-dimensional real
analytic function around $0$,
\begin{equation*}
  f(y) = \sum \frac{f^{(n)}(0)}{n!} y^n.
\end{equation*}
The coefficient $n!$ here has a nice combinatorial interpretation as the
cardinality of the group of permutations of $\{1,\dotsc, n\}$,
see~\eqref{eq:deriv_analytic}.
\end{rmk}

\begin{attempt}
\label{att:locality} As a second attempt, we could say that a function is
$C^n$ (for $n\in \N \cup \{\infty\}$) on a set $s$ if there exists a sequence
$(p_m)$ which is a Taylor series for $f$ up to $n$ on $s$ in the above sense.

The issue with this definition is locality, both in space and in regularity.
A good notion of $C^n$ function should satisfy the following: assume that,
for all $x \in s$, there is a neighborhood $u$ of $x$ in $s$ on which $f$ is
$C^n$. Then $f$ is $C^n$ on $s$. Also, if a function is $C^n$ on $s$ for all
natural number $n$, then it should be $C^\infty$. The above definition a
priori fails these two properties, as we will explain now.s

If $f$ is $C^n$ in this sense on a neighborhood $u$ of $x$ in $s$, one gets a
corresponding sequence of multilinear maps $p_m^{(u)}$, indexed by $u$,
giving a Taylor series for $f$ on $u$. If these were compatible in the sense
that $p_m^{(u)} = p_m^{(v)}$ on $u \cap v$, one could glue them together for
$f$ on the whole $s$. However, due to the lack of uniqueness of derivatives,
there is no guarantee of compatibility. In a finite-dimensional real vector
space, one could instead extract a locally finite covering of the $u$'s, then
use a smooth partition of unity $\rho_u$ subordinate to this locally finite
covering, and form the (locally finite) sum $\sum_u \rho_u p_m^{(u)}$. This
would be a Taylor series for $f$ on the whole set $s$. However, the existence
of smooth partitions of unity is not always true in infinite dimension, or on
other fields. For instance, there is no complex-differentiable partition of
unity on $\C$ as a complex-differentiable function with compact support
vanishes (as it is analytic).
\end{attempt}

Our solution to this issue is to bake in locality into the definition: we
will say that $f$ is $C^n$ within $s$ at $x$ (for $n \in \N \cup \{\infty\}$)
if, for every natural number $n' \le n$, there exist a neighborhood $u$ of
$x$ within $s$ and a formal multilinear series $(p_m)$ such that $f$ admits
$(p_m)$ as a Taylor series on $v$ up to order $n'$. Then, we will say that
$f$ is $C^n$ on $s$ if it is $C^n$ within $s$ at every point of $s$. By
design, this definition is local both in space and in regularity.

\medskip

Let us now integrate analytic functions into our definition of $C^n$
functions.

\begin{attempt}
As a third attempt, let us say that $f$ is $C^n$ within $s$ at $x$ (for $n
\in \N \cup \{\infty\}$) using the above local definition, and that it is
$C^\omega$ within $s$ at $x$ if it is analytic within $s$ at $x$.

With this definition, however, it is not true that if $f$ is $C^n$ and $m\le
n$ then $f$ is $C^m$, while this is a property we definitely expect of a
scale of smooth functions. This is due to the bad behavior of analytic
functions on non-complete spaces. For instance, the function $g$ of
Example~\ref{ex:bad_analytic} is analytic at $0$ but it is not $C^1$ at $0$,
as explained at the end of the example.
\end{attempt}

A solution would be to require all our target spaces to be complete, as is
for instance done in~\cite{bourbaki_diff}. However, this is unnecessarily
restrictive: most of the theory can be built without completeness, and
completeness should only be brought as an assumption in theorems that really
need it. Instead, our strategy is to require that $C^\omega$ functions are
analytic, as well as all their derivatives. In this way, we recover
monotonicity of the scale of smooth functions all the way up to $\omega$.

We can now give the formalized definition of $C^n$ functions used in
\mathlib. We introduce a predicate %
\lstinline{ContDiffWithinAt 𝕜 n f s x} registering that $f$ is $C^n$ (over the
field $\bbk$) within $s$ at $x$, defined as follows:
\begin{lstlisting}
def ContDiffWithinAt
    (𝕜) (n : WithTop ℕ∞) (f : E → F) (s : Set E) (x : E) : Prop :=
  match n with
  | (n : ℕ∞) => ∀ m : ℕ, m ≤ n → ∃ u ∈ 𝓝[insert x s] x,
      ∃ p : E → FormalMultilinearSeries 𝕜 E F, HasFTaylorSeriesUpToOn m f p u
  | ω => ∃ u ∈ 𝓝[insert x s] x,
      ∃ p : E → FormalMultilinearSeries 𝕜 E F,
      HasFTaylorSeriesUpToOn ω f p u ∧
      ∀ i, AnalyticOn 𝕜 (fun x ↦ p x i) u
\end{lstlisting}
If the regularity $n$ belongs to $\N \cup \{\infty\}$, then for all integer
$m \le n$ we require the existence of a neighborhood $u$ of $x$ within $s
\cup \{x\}$ on which $f$ admits a Taylor series up to order $m$. This
corresponds to the above discussion, except that we use $s \cup \{x\}$
instead of $s$: this avoids pathologies where the Taylor series would be
discontinuous at $x$. For $n=\omega$, we require the existence of a
neighborhood $u$ of $x$ within $s$ on which $f$ admits a Taylor series at
infinite order, made of analytic functions.

Once we have this predicate, we define a version not localized to a set,
called \lstinline{ContDiffAt} (obtained by taking for $s$ the whole space in
the above definition), a version asking that $f$ is $C^n$ within $s$ at all
points of $s$ called \lstinline{ContDiffOn}, and a version in the whole space
called \lstinline{ContDiff}. These are the perfect analogues of the
corresponding definitions for differentiable functions and for analytic
functions: such a coherence is very helpful to the users of the library. The
lemmas for these three notions also follow a completely identical naming
scheme.

\medskip

Let us now give a few basic statements showing that the definition above
lives up to our promises.

The scale of $C^n$ functions is monotone with respect to $n$:
\begin{lstlisting}
theorem ContDiffWithinAt.of_le
    (h : ContDiffWithinAt 𝕜 n f s x) (hmn : m ≤ n) :
    ContDiffWithinAt 𝕜 m f s x
\end{lstlisting}
In particular, a $C^\omega$ function is $C^\infty$.

In a complete space, $C^\omega$ and analytic coincide:
\begin{lstlisting}
theorem contDiffWithinAt_omega_iff_analyticWithinAt
    [CompleteSpace F] :
    ContDiffWithinAt 𝕜 ω f s x ↔ AnalyticWithinAt 𝕜 f s x
\end{lstlisting}

For $n\ne \infty$, a function is $C^n$ within a set at a point if and only if
there exists a neighborhood on which it has a Taylor series at order $n$, and
moreover if $n=\omega$ all terms in this Taylor series should be analytic:
\begin{lstlisting}
lemma contDiffWithinAt_iff_of_ne_infty (hn : n ≠ ∞) :
    ContDiffWithinAt 𝕜 n f s x ↔ ∃ u ∈ 𝓝[insert x s] x,
      ∃ p : E → FormalMultilinearSeries 𝕜 E F, HasFTaylorSeriesUpToOn n f p u ∧
        (n = ω → ∀ i, AnalyticOn 𝕜 (fun x ↦ p x i) u)
\end{lstlisting}
Note that this is not true for $n=\infty$: consider $f : \R \to \R$ which is
$C^n$ and not better on $(-1/n, 1/n)$. Then it is $C^\infty$ at $0$, but not
$C^\infty$ on a neighborhood of $0$.

For $n\ne \infty$, a function is $C^{n + 1}$ on a domain iff locally, it has
a derivative which is $C^n$ (and moreover the function is analytic when $n =
\omega$).
\begin{lstlisting}
theorem contDiffWithinAt_succ_iff_hasFDerivWithinAt
    (hn : n ≠ ∞) :
    ContDiffWithinAt 𝕜 (n + 1) f s x ↔ ∃ u ∈ 𝓝[insert x s] x,
      (n = ω → AnalyticOn 𝕜 f u) ∧ ∃ f' : E → E →L[𝕜] F,
      (∀ x ∈ u, HasFDerivWithinAt f (f' x) u x) ∧ ContDiffWithinAt 𝕜 n f' u x
\end{lstlisting}

\subsection{Operations on functions respect smoothness}

There is a tension between our definition and standard proofs of properties
of $C^n$ functions. Here is the standard proof that the sum of two $C^n$
functions is $C^n$, based on Definition~\ref{def:Cn}.
\begin{prop}
\label{prop:add_Cn} Let $E, F$ be two real normed spaces and $f, g : E \to F$
be two $C^n$ functions for $n \in \N$. Then $f+g$ is $C^n$.
\end{prop}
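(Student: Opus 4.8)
The natural route is induction on $n$, following the inductive shape of Definition~\ref{def:Cn}. The base case $n = 0$ is immediate: $f$ and $g$ are continuous, hence so is $f + g$.

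For the inductive step, assume the statement for $n \in \N$ and let $f, g : E \to F$ be $C^{n+1}$. By Definition~\ref{def:Cn}, $f$ and $g$ are differentiable and their derivatives $Df, Dg : E \to \boL(E, F)$ are $C^n$. Two observations then finish the argument. First, $f + g$ is differentiable with $D(f+g) = Df + Dg$: this is linearity of the Fréchet derivative, immediate from Definition~\ref{defn:derivative} since the error term for $f+g$ at $x$ is the sum of the two $o(y-x)$ error terms for $f$ and $g$, and on the whole space the derivative is unique so this equality of functions genuinely holds. Second, the induction hypothesis applied to $Df$ and $Dg$ — viewed as maps into the real normed space $\boL(E, F)$ — gives that $Df + Dg$ is $C^n$. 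Hence $f+g$ is differentiable with a $C^n$ derivative, i.e., it is $C^{n+1}$, which closes the induction.

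The genuine obstacle — presumably the ``tension'' the paper is flagging — is not the mathematics of this induction but the fact that it is phrased entirely through Definition~\ref{def:Cn}, whereas the \mathlib predicate is the Taylor-series \texttt{ContDiffWithinAt}. To run the argument in \mathlib one must cross that bridge at each step: invoke the characterization of $C^{n+1}$ via a $C^n$ derivative (such as \texttt{contDiffWithinAt\_succ\_iff\_hasFDerivWithinAt}, available here since $n \in \N$ forces $n \ne \infty$) to extract a local derivative that is $C^n$, identify the derivative of $f+g$ using linearity of \texttt{fderiv} (as in \texttt{fderivWithin\_add}, with \texttt{UniqueDiffWithinAt} automatic on the whole space), and feed the result back through the same characterization. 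Without such a bridge one is pushed onto the more laborious route of adding the Taylor series of $f$ and of $g$ termwise on a common neighborhood and checking directly that the sum is a Taylor series for $f+g$ — which I expect is the contrast the paper is about to draw.
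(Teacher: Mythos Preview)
Your inductive argument is correct and matches the paper's first proof essentially verbatim: base case $n=0$ by continuity of sums, inductive step via $D(f+g)=Df+Dg$ and the hypothesis applied to maps into $\boL(E,F)$. Your closing paragraph also correctly anticipates the paper's second proof --- adding the two Taylor series termwise --- which the paper then prefers for formalization because the inductive route runs into universe-polymorphism issues and does not extend to $n=\omega$.
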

\begin{proof}
We argue by induction on $n$. For $n=0$, the result is standard. For $n>0$,
according to Definition~\ref{def:Cn}, we should check that $f+g$ is
differentiable, which is standard, and that $D(f+g)$ is $C^{n-1}$. As $D(f+g)
= Df + Dg$, this follows from the inductive assumption applied to the
functions $Df$ and $Dg$ from $E$ to $\boL(E, F)$.
\end{proof}
Our definition of $C^n$ functions in \mathlib is not recursive, so the above
proof can not be adapted directly. However, we have another recursive
characterization stated above in
\lstinline{contDiffWithinAt_succ_iff_hasFDerivWithinAt}, that we could try to
use. This approach has two issues.

First, it does not work for $C^\omega$ functions: a function is $C^\omega$ if
it is analytic and has a derivative which is $C^\omega$ -- this can not be
used in an inductive-like argument. Second, there are universe polymorphism
issues. The proof above applies the inductive assumption to another pair of
functions, $Df$ and $Dg$, with a different target space, $\boL(E, F)$.
Therefore, a proper formulation of the inductive assumption would be: for all
real normed spaces $E$ and $F$ and all pairs of $C^n$ functions $f, g$ from
$E$ to $F$, the sum $f+g$ is $C^n$. To avoid Russell-like paradoxes, in the
dependent type theory of Lean, spaces live in a \emph{universe}, which can
not be quantified in statements or proofs: one can not write ``for all
universes $u$, for all spaces in the universe $u$, then ...'' In particular,
for an induction to make sense, the spaces featured in the induction should
live in a fixed universe. If $E$ is in a universe $u$ and $F$ is in a
universe $v$, then $\boL(E, F)$ lives in another universe $\max(u,v)$, which
means that our induction will not work correctly. This issue can be worked
around as follows. First prove a less polymorphic statement, in which $E$ and
$F$ belong to the same universe $u$ (as well as all the spaces of multilinear
maps that show up in the proof, enabling the induction). Then reduce the
general case to the specific case: lift $E$ and $F$ to spaces $E'$ and $F'$
in a common universe, together with $C^n$ lifted functions $f'$ and $g'$,
then argue that $f' + g'$ is $C^n$ using the less polymorphic statement, and
finally deduce that $f + g$ is $C^n$ as being $C^n$ is invariant under
continuous linear isomorphisms.

Given these two difficulties, and especially the first one which can not be
worked around, it follows that the formalized proof should probably not
follow the inductive classical proof above, and instead be more in line with
the definition choice we have made. Here is another paper proof of
Proposition~\ref{prop:add_Cn}, but closer to our formal definition.

\begin{proof}[Another proof of Proposition~\ref{prop:add_Cn}]
As $f$ and $g$ are $C^n$, they admit respective Taylor series $p$ and $q$ up
to order $n$. Then one checks readily that $p+q$ is a Taylor series up to
order $n$ for $f+g$, proving that $f+g$ is $C^n$.
\end{proof}
There is no inductive character to this proof, which means that it is
straightforward to formalize, to get the following:
\begin{lstlisting}
theorem ContDiffWithinAt.add
    (hf : ContDiffWithinAt 𝕜 n f s x)
    (hg : ContDiffWithinAt 𝕜 n g s x) :
    ContDiffWithinAt 𝕜 n (fun x ↦ f x + g x) s x
\end{lstlisting}

A lesson here is that, instead of proving inductively that functions are
$C^n$, it works better with our definition to \emph{give a formula} for the
successive derivatives, as an operation on the Taylor series of the original
functions. For the sum of functions, this is very easy (the Taylor series of
the sum is the sum of the Taylor series), but this creates a difficulty for
the composition.

\begin{thm}
Let $E, F, G$ be real normed spaces. Consider $f : E \to F$ and $g: F \to G$
two $C^n$ functions. Then the composition $g \circ f$ is $C^n$.
\end{thm}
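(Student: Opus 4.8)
The plan is to sidestep the classical inductive proof of the chain rule (which, as just discussed for sums, breaks at $C^\omega$ and runs into universe trouble) and instead exhibit an \emph{explicit} Taylor series for $g\circ f$ built from those of $f$ and $g$, via the Faà di Bruno composition of formal multilinear series. Given a series $q=(q_j)$ (playing the role of the iterated derivatives of $g$) and a series $p=(p_k)$ (for $f$), its $n$-th term, applied to $(v_1,\dots,v_n)$, is
\[
  \sum_P q_\ell\bigl( p_{\#P_1}(v_i : i\in P_1),\dots, p_{\#P_\ell}(v_i : i\in P_\ell)\bigr),
\]
where $P=(P_1,\dots,P_\ell)$ ranges over the ordered partitions of $\{1,\dots,n\}$ into $\ell$ nonempty blocks and $\#P_j$ is the size of the $j$-th block. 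I would package this as an operation on \texttt{FormalMultilinearSeries} (pointwise in $x$, feeding the point $f(x)$ to $q$), check that it is well defined — each term is a \emph{finite} sum of composites of continuous multilinear maps, hence again a continuous multilinear map, so no growth estimate is needed here, unlike for the analytic composition of Section~\ref{sec:analytic} — and then prove the central lemma: if \texttt{HasFTaylorSeriesUpToOn}~$n$ holds for $f$ with series $p$ on $s$ and for $g$ with series $q$ on $t$, and $f$ maps $s$ into $t$, then this composed series is a Taylor series of order $n$ for $g\circ f$ on $s$.

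Of the three components of being a Taylor series, \texttt{zero\_eq} is immediate (the $0$-th term is $x\mapsto g(f(x))$), and \texttt{cont} is routine: each term is a finite sum of evaluations of the continuous-on-$s$ maps $x\mapsto p_k(x)$ and $x\mapsto q_\ell(f(x))$ (continuity of the latter using continuity of $f$ on $s$ and of $q_\ell$ on $t$), and multilinear evaluation is jointly continuous. The real work is \texttt{fderivWithin}: that the derivative within $s$ of the $m$-th term is the $(m+1)$-st term, suitably curried. I would prove this by differentiating the displayed sum term by term — each factor $x\mapsto p_k(x)(\dots)$ has derivative governed by $p_{k+1}$ (the Taylor hypothesis for $f$), each factor $x\mapsto q_\ell(f(x))(\dots)$ has derivative governed by $q_{\ell+1}$ precomposed with $Df(x)=p_1(x)$ (the Taylor hypothesis for $g$ together with the already-available first-order chain rule \texttt{HasFDerivWithinAt.comp}) — and then reorganizing: differentiating a term indexed by a partition of $\{1,\dots,m\}$ produces a sum over the ways to accommodate the new coordinate $m+1$, namely inserting it into some existing block or forming the singleton $\{m+1\}$, and these contributions match, via a bijection on ordered partitions, exactly the sum defining the $(m+1)$-st term over partitions of $\{1,\dots,m+1\}$. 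This bijective/combinatorial identity, executed honestly with all the currying and multilinear bookkeeping, is the heart of the argument and the step I expect to be by far the most delicate to formalize; everything around it is routine.

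With the central lemma available, the theorem follows by unfolding the definition of $C^n$, with no induction and hence no universe issue. For $n\in\N$, at a point $x$ the hypotheses give a neighborhood $u$ of $x$ with a Taylor series of order $m$ for $f$, and a neighborhood $v$ of $f(x)$ with one for $g$; since $f$ is $C^m$ it is continuous, so $u\cap f^{-1}(v)$ is still a neighborhood of $x$ on which $f$ maps into $v$, and the central lemma yields a Taylor series of order $m$ for $g\circ f$ there — doing this for every $m\leq n$ gives that $g\circ f$ is $C^n$ at $x$. (There is the usual small amount of set juggling with the \texttt{insert} in the definition, which evaporates once the neighborhoods are shrunk appropriately; in the whole-space statement asked for here the domains are everything, so only this continuity-shrinking is needed.) The case $n=\infty$ reduces to the finite cases, being $C^\infty$ meaning $C^m$ for all $m\in\N$. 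For $n=\omega$, the hypotheses provide a \emph{single} neighborhood and a \emph{single} series for $f$ (resp.\ $g$) valid at all orders with all terms analytic; the central lemma applies verbatim, since \texttt{HasFTaylorSeriesUpToOn}~$\omega$ is exactly the family of derivative relations at all finite orders, and each term of the composed series is analytic as a finite sum of fixed continuous-multilinear combinations of the analytic functions $x\mapsto p_k(x)$ and $x\mapsto q_\ell(f(x))$ — the latter analytic because $f$, being $C^\omega$, is analytic on its neighborhood and composition of analytic functions is analytic (Section~\ref{sec:analytic}). Finally, the whole-space theorem stated above is the special case of the within-a-set version where both domains are the entire space, the only genuinely extra hypothesis in the general version being that $f$ sends $g$'s source domain into $g$'s domain.
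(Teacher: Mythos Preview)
This is essentially the paper's own approach: construct the Taylor series of $g\circ f$ explicitly via the Faà di Bruno formula, prove \texttt{HasFTaylorSeriesUpToOn} for it by the partition-extension bijection you describe (which the paper likewise singles out as the technical heart, noting it occupies roughly a thousand lines), and handle $n=\omega$ via the separately established analyticity of compositions. The only cosmetic difference is bookkeeping: the paper adjoins the new coordinate as $-1$ on the left rather than $m+1$ on the right, so that the \texttt{curryLeft} convention and the ordering of blocks by their maximum element line up---your bijection is the same one up to this relabeling, provided your ``ordered partitions'' means set partitions with a fixed canonical block order rather than compositions (summing over all block orderings would overcount).
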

Here is the standard textbook inductive proof.
\begin{proof}
The statement is easy for $n=0$. For $n>0$, the function $g \circ f$ is
differentiable as the composition of differentiable functions, and its
derivative is given by $D(g \circ f)(x) (v) = Dg (f(x)) (Df(x) (v))$. We
should show that this is a $C^{n-1}$ function of $x$.

Let $M : \boL(E, F) \times \boL(F, G) \to \boL(E, G)$ be the composition of
linear maps. This is a continuous bilinear map, hence $C^\infty$. Let $\Phi :
E \to \boL(E, F) \times \boL(F, G)$ be $(Df, (Dg) \circ f)$. As $f$ is $C^n$,
then $Df$ is $C^{n-1}$. Moreover, $(Dg) \circ f$ is $C^{n-1}$ using our
inductive assumption, as the composition of two $C^{n-1}$ functions. It
follows that $\Phi$ is $C^{n-1}$. By construction, $D(g\circ f) = M \circ
\Phi$. It is the composition of two $C^{n-1}$ functions, therefore $C^{n-1}$
again by our inductive assumption.
\end{proof}

It is not obvious from this proof how to give a formula for the derivative of
$g \circ f$, and the formula would probably be very complicated. For
instance, the second derivative is
\begin{align*}
  D^2 (g\circ f) (x) (v_0, v_1) = {}& D^2 g(f(x)) (Df(x) v_0, Df(x) v_1)
  \\ & + Dg (f(x)) (D^2 f(x) (v_0, v_1)).
\end{align*}

In a previous \mathlib implementation of $C^n$ functions, in which analytic
functions were not yet integrated in the smooth hierarchy, the fact that the
composition of $C^n$ functions is $C^n$ was proved following the above
inductive argument (and dealing with the universe polymorphism issue thanks
to a lifting step, as explained above). This made it possible to avoid the
complicated formula for the $n$-th derivative of a composition.

In the refactor integrating analytic functions in the smooth hierarchy, the
smoothness of composition of functions has been the most difficult point: we
had to switch proofs since the inductive argument does not make sense for
$C^\omega$ functions. Instead, we state and prove the formula for the Taylor
series of a composition, called the Faà di Bruno formula. This formula is
usually given for functions of one variable, or using partial derivatives for
functions of finitely many variables, but there is a general formula in terms
of Fréchet derivatives which we will now describe. It is expressed in terms
of partitions $I$ of $\{0, ..., n-1\}$. For such a partition, denote by $k$
its number of parts, write the parts as $I_0, \dotsc, I_{k-1}$ ordered so
that $\max I_0 < ... < \max I_{k-1}$, and let $i_m$ be the number of elements
of $I_m$. Then
\begin{equation*}
D^n (g \circ f) (x) (v_0, \dotsc, v_{n-1}) = \sum_I
    D^k g (f (x)) (D^{i_0} f (x) (v_{I_0}), ..., D^{i_{k-1}} f (x) (v_{I_{k-1}}))
\end{equation*}
where the sum is over all partitions of $\{0, \dotsc, n-1\}$ and by $v_{I_m}$
we mean the family of vectors $(v_i)$ with indices $i$ in $I_m$.

The formula is straightforward to prove by induction. Differentiating
\begin{equation*}
  D^k g (f (x)) (D^{i_0} f (x) (v_{I_0}), \dotsc, D^{i_{k-1}} f (x)
(v_{I_{k-1}})),
\end{equation*}
we get a sum with $k + 1$ terms where one differentiates
either $D^k g (f (x))$ or one of the $D^{i_m} f (x)$, amounting to adding to
the partition $I$ either a new atom $\{-1\}$ to its left, or extending $I_m$
by adding $-1$ to it. In this way, one obtains bijectively all partitions of
$\{-1, \dotsc, n-1\}$, and the proof can go on (up to relabeling).

The formalization of this formula is rather intricate. Even writing down
things precisely is hard, as indicated by the number of ellipses in the above
formula. The most technical part is to show that extending partitions of
$\{0,\dotsc, n-1\}$ by the processes described above, one obtains bijectively
all partitions of $\{-1, \dotsc, n-1\}$. Overall, the formalization of the
Faà di Bruno formula takes slightly more than 1000 lines.

Once this formula is available, the proof that the composition of $C^n$
functions is $C^n$ is easy, as the formula tells how to construct a Taylor
series for $g \circ f$ in terms of Taylor series for $g$ and $f$. For
$C^\omega$ functions, one should additionally use that the composition of
analytic functions is analytic, a fact which has already been discussed in
Section~\ref{sec:analytic}.

\medskip

Closely related to composition is inversion. The inverse function theorem
states that, if a function between complete spaces is $C^n$ at a point $x_0$
with invertible derivative, then it admits a local inverse which is also
$C^n$. There are two parts in this theorem. First, if $f$ is $C^1$ at $x_0$
with invertible derivative, then it admits a local inverse $g$. Second, if in
fact $f$ is $C^n$, then $g$ is also automatically $C^n$. The two parts are
essentially independent.

The deeper one is the first one, constructing the inverse by a suitable
application of the Banach contraction principle. The $C^1$ assumption is not
the correct one over a general field: one should instead require that $f$ is
strictly differentiable at $x_0$, i.e., $f(y)-f(z) - Df(x_0) (y-z) = o(y-z)$
as $y$ and $z$ tend to $x_0$. This follows from a $C^1$ assumption over $\R$
or $\C$ (by the mean value inequality), but not over general fields. However,
strict differentiability follows from analyticity, which is the right
smoothness condition to consider over general fields anyway as further
discussed in Section~\ref{sec:symm}.

For the second part, one can check the regularity inductively: as $f \circ g
= \id$, one gets $Df(g(x)) Dg(x) = \id$, so $Dg(x) = Df(g(x))^{-1}$. This
formula makes it possible to transfer a $C^n$ assumption on $g$ to a $C^n$
assumption on $Dg$, deducing that $g$ is $C^{n+1}$ and proceeding
inductively. Note that this argument deduces that $g$ is $C^{n+1}$ from the
fact that it is $C^n$. The spaces or the maps do not change over the
induction, so there is no formalization difficulty here. The analytic case
has to be treated separately as one needs additional growth controls, but
this has already been discussed in Section~\ref{sec:analytic}. Altogether, we
get a generic statement of the inverse function theorem, working both for
$n\ne\omega$ and $n=\omega$ in a unified way.

The first step is formalized by constructing a local inverse from a strictly
differentiable map with invertible derivative. This is phrased in \mathlib
with \lstinline{PartialHomeomorph}, which contains the data of a map (here
$f$) and another map, which are locally inverse to each other.
\begin{lstlisting}
def HasStrictFDerivAt.toPartialHomeomorph
    (f : E → F) {f' : E ≃L[𝕜] F} [CompleteSpace E]
    (hf : HasStrictFDerivAt f f' a) :
    PartialHomeomorph E F
\end{lstlisting}
Here, \lstinline{E ≃L[𝕜] F} is the space of continuous linear bijections
between $E$ and $F$. Therefore, our assumption \lstinline{hf} is a way of
saying that the derivative of $f$ is invertible.

The second step is formalized by saying that if two functions are locally
inverse to each other, then the second inherits the regularity properties of
the first:
\begin{lstlisting}
theorem PartialHomeomorph.contDiffAt_symm
    [CompleteSpace E] (f : PartialHomeomorph E F)
    {f₀' : E ≃L[𝕜] F} {a : F} (ha : a ∈ f.target)
    (hf₀' : HasFDerivAt f f₀' (f.symm a))
    (hf : ContDiffAt 𝕜 n f (f.symm a)) :
    ContDiffAt 𝕜 n f.symm a
\end{lstlisting}
This can be applied to the partial homeomorphism %
\lstinline{HasStrictFDerivAt.toPartialHomeomorph f hf} that has been
constructed in the first step.

\subsection{Iterated derivatives}

To circumvent the difficulties coming from the non-unique\-ness of
derivatives, we have defined $C^n$ functions via the existence of a nice
Taylor series. However, as for \lstinline{fderivWithin} and
\lstinline{fderiv}, it is often useful to have available \emph{some} choice
of an iterated derivative -- although in general one will only be able to
prove that it behaves well under a unique differentials assumption on the
domain. There are two possible choices here: one may define the $(n + 1)$-th
iterated derivative either as the derivative of the $n$-th derivative, or as
the $n$-th derivative of the derivative. For type-theoretic reasons, the
$n$-th derivative of the derivative can not really be used: if it were an
inductive definition over $n$, it would also need to generalize the spaces as
to get the $n+1$-th derivative of $f:E\to F$ one would need the $n$-th
derivative of $Df : E \to \boL(E, F)$. Since the universe of $\boL(E,F)$ is
usually not the same as that of $F$, this can not be done in a plain
inductive definition in Lean. Therefore, we define the $(n+1)$-th derivative
as the derivative of the $n$-th derivative (modulo currying). This is in line
with our definition of $C^n$ functions using sequences of multilinear maps
$p_n$, where $p_{n+1}$ has to be a derivative of $p_n$.

The formal definition is the following:
\begin{lstlisting}
def iteratedFDerivWithin (𝕜) (n : ℕ) (f : E → F) (s : Set E) :
    E → E[×n]→L[𝕜] F :=
  Nat.recOn n
    (fun x ↦ ContinuousMultilinearMap.uncurry0 𝕜 E (f x))
    (fun _ rec x ↦ ContinuousLinearMap.uncurryLeft
      (fderivWithin 𝕜 rec s x))
\end{lstlisting}
This definition is written using the bare recursor \lstinline{Nat.recOn}
instead of versions with more syntactic sugar because of the position of $n$
in the arguments and of difficulties of the elaborator with the complicated
dependent type \lstinline{E[×n]→L[𝕜] F}. Its content is however what we
expect: for $n=0$, use the function seen as a $0$-multilinear map, and for
$n+1$ use the derivative of the $n$-th derivative, seen as an
$(n+1)$-multilinear map.

We also define \lstinline{iteratedFDeriv}, corresponding to taking for $s$
the whole space.

In domains with unique differentials (including the whole space, open sets,
or convex sets with nonempty interior), if a function has a Taylor series up
to order $n$ then this Taylor series has to coincide with
\lstinline{iteratedFDerivWithin}, by uniqueness. Therefore, our proofs that
the sum or composition of $C^n$ functions are $C^n$, giving formulas for the
Taylor series, yield also formulas for iterated derivatives. For instance,
for the addition, we get
\begin{lstlisting}
theorem iteratedFDerivWithin_add_apply
    (hf : ContDiffWithinAt 𝕜 i f s x)
    (hg : ContDiffWithinAt 𝕜 i g s x)
    (hu : UniqueDiffOn 𝕜 s) (hx : x ∈ s) :
    iteratedFDerivWithin 𝕜 i (f + g) s x =
      iteratedFDerivWithin 𝕜 i f s x
      + iteratedFDerivWithin 𝕜 i g s x
\end{lstlisting}

When we defined $C^n$ functions, we had to take care of locality to avoid the
non-uniqueness of derivatives, as discussed in Attempt~\ref{att:locality}. In
domains with unique differentials, the issue disappears: different Taylor
series around different points have to coincide locally, and can therefore be
glued together. In other words, they coincide with a global object, which is
\lstinline{iteratedFDerivWithin}. Let \lstinline{ftaylorSeriesWithin 𝕜 f s} be
the Taylor series formed using the sequence of iterated derivatives. We get
\begin{lstlisting}
theorem ContDiffOn.ftaylorSeriesWithin
    (h : ContDiffOn 𝕜 n f s) (hs : UniqueDiffOn 𝕜 s) :
    HasFTaylorSeriesUpToOn n f
      (ftaylorSeriesWithin 𝕜 f s) s
\end{lstlisting}

We have mentioned above that for the inductive definition of
\lstinline{iteratedFDerivWithin}, we had to write it as the derivative of the
iterated derivative, for type theoretic reasons. However, once the definition
is available, one can prove that it coincides with the other natural
definition, i.e., it is also the iterated derivative of the derivative
(modulo currying):
\begin{lstlisting}
theorem iteratedFDerivWithin_succ_eq_comp_right
    (hs : UniqueDiffOn 𝕜 s) (hx : x ∈ s) :
    iteratedFDerivWithin 𝕜 (n + 1) f s x =
    (continuousMultilinearCurryRightEquiv' 𝕜 n E F).symm
      (iteratedFDerivWithin 𝕜 n (fderivWithin 𝕜 f s) s x)
\end{lstlisting}
This requires the domain to have unique differentials, as otherwise the
non-uniqueness of derivatives will break the result. Note that there is no
differentiability assumption in this result: this is made possible by our
choice of junk value $0$ for the derivative of a non-differentiable function.

\begin{rmk}
The formalization of the properties of $C^n$ functions contains many
curryfication or uncurryfication steps to identify for instance $\boL(E,
\boL(E, F))$ with continuous bilinear maps from $E \times E$ to $F$. In
textbooks such as~\cite{cartan_differential}, these identifications are
discussed at the beginning, and they are then used implicitly everywhere.
There is a subtlety here, though. Let $D(Df)$ denote the derivative of the
derivative of $f$, as an element of $\boL(E, \boL(E, F))$, and $D^2 f$ its
continuous bilinear map version. When we say that a function is $C^3$, are we
saying that $D(Df)$ has a derivative which is continuous, or that $D^2 f$ has
a derivative which is continuous? And do we get the same results by
uncurrying those two derivatives? The answer to these two questions is yes,
fortunately. This comes from the fact that the curryfication process is
itself a continuous linear bijection between two vector spaces of functions,
and continuous linear bijections respect differentiability and commute with
taking derivatives. With our choice of junk values that non-differentiable
functions have zero derivatives, composing with a continuous linear bijection
and taking the derivative commute even for non-differentiable functions,
which is extremely convenient in the formalization process. Note however that
this discussion is only true for functions on domains with unique
differentials: otherwise, the non-uniqueness of derivatives breaks this
commutation, which has to be accounted for in the formalization.
\end{rmk}

\section{Symmetry of iterated derivatives}
\label{sec:symm}

An important property of the second derivative in real vector space is its
symmetry $D^2 f(x) (v, w) = D^2 f(x) (w, v)$. Here is a precise version with
weak assumptions, see~\cite[Theorem~5.1.1]{cartan_differential}:
\begin{thm}
Let $E,F$ be two real normed vector spaces and let $f: E \to F$. Assume that
$f$ is differentiable on a neighborhood of a point $x$, and that $Df$ is
differentiable at $x$. Then $D^2 f(x) (v, w) = D^2 f(x) (w, v)$ for all
vectors $v, w$.
\end{thm}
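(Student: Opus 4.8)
The plan is to run the classical argument based on the second-order difference quotient, which is symmetric in $v$ and $w$ by construction. Fix $v, w \in E$, let $U$ be an open neighborhood of $x$ on which $f$ is differentiable, and observe that the points $x + s\,tv + tw$ for $s \in [0,1]$ all lie in $U$ once $\abs{t}$ is small enough: they depend continuously on $(s,t)$ and reduce to $x$ at $t=0$, and $[0,1]$ is compact, so there is $t_0>0$ that works for all $s$ simultaneously. For $\abs{t}<t_0$ set
\[
  \Delta(t) = f(x+tv+tw) - f(x+tv) - f(x+tw) + f(x),
\]
which is visibly invariant under exchanging $v$ and $w$. The whole theorem then reduces to the asymptotic expansion $\Delta(t) = t^2\, D^2 f(x)(w,v) + o(t^2)$ as $t \to 0$: combined with the same expansion obtained by exchanging the roles of $v$ and $w$, dividing by $t^2$ and letting $t \to 0$ yields $D^2 f(x)(w,v) = D^2 f(x)(v,w)$.

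To prove the expansion I would, for each fixed small $t$, introduce the auxiliary curve $g : [0,1] \to F$ given by $g(s) = f(x + s\,tv + tw) - f(x + s\,tv)$, so that $g(1) - g(0) = \Delta(t)$. Since $f$ is differentiable throughout $U$, the curve $g$ is differentiable with $g'(s) = t\,Df(x + s\,tv + tw)(v) - t\,Df(x + s\,tv)(v)$ by the chain rule. Now apply differentiability of $Df$ at $x$: writing $L = D(Df)(x) \in \boL(E, \boL(E, F))$, we have $Df(x + h) = Df(x) + L(h) + o(\norm{h})$ as $h \to 0$. Evaluating at $h = s\,tv + tw$ and at $h = s\,tv$ and subtracting, the $Df(x)$ terms cancel and linearity of $L$ collapses $L(s\,tv + tw) - L(s\,tv)$ to $L(tw) = t\,L(w)$, leaving $Df(x+s\,tv+tw)(v) - Df(x+s\,tv)(v) = t\,L(w)(v) + o(t)$, where the error is $o(t)$ \emph{uniformly in $s \in [0,1]$} because $\norm{s\,tv + tw}$ and $\norm{s\,tv}$ are both bounded by $\abs{t}(\norm{v} + \norm{w})$ with a constant independent of $s$. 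Hence $g'(s) = t^2\,L(w)(v) + o(t^2)$ uniformly in $s$.

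Finally, apply the mean value inequality (available over $\R$) to the curve $s \mapsto g(s) - s\,t^2\,L(w)(v)$ on $[0,1]$: its derivative has norm $o(t^2)$ uniformly, so $\norm{\Delta(t) - t^2\,L(w)(v)} = \norm{g(1) - g(0) - t^2\,L(w)(v)} \le \sup_{s \in [0,1]}\norm{g'(s) - t^2\,L(w)(v)} = o(t^2)$, which is exactly the claimed expansion once one identifies $L(w)(v)$ with $D^2 f(x)(w,v)$. I expect the main obstacle to be the bookkeeping in the second step: turning the ``uniformly in $s$'' little-$o$ into a clean $\epsilon$--$\delta$ statement — shrinking $t_0$ so that $s\,tv + tw$ and $s\,tv$ stay within the radius on which the estimate for $Df$ holds, and carrying the constant $\norm{v}$ through the evaluation at $v$ — together with the routine but fiddly currying steps relating the bilinear form $D^2 f(x)(\cdot,\cdot)$ of the statement to the iterated Fréchet derivative $D(Df)(x)$, and matching the vector-valued mean value inequality on the segment $[0,1]$ to its precise form in \mathlib.
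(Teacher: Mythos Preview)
Your argument is correct and follows exactly the route the paper indicates: both you and the paper reduce symmetry to the fact that the symmetric second-order difference $\Delta(t) = f(x+tv+tw)-f(x+tv)-f(x+tw)+f(x)$ satisfies $\Delta(t)/t^2 \to D^2 f(x)(v,w)$. The paper gives this as a one-line sketch; you supply the standard details via the auxiliary curve $g$ and the mean value inequality, which is precisely the ingredient the paper later singles out as essential to the real case.
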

This result follows from the fact that $(f(x+ tv+tw) -f(x+tv) -f(x+tw) +
f(x))/t^2$ converges to $D^2 f(x) (v, w)$ as $t$ tends to $0$. Since the
expression on the left is symmetric in $v,w$, it follows that the limit also
is.

This result is fundamental for many constructions in geometry. For instance,
define the Lie bracket of two vector fields $V, W : E \to E$ by
\begin{equation*}
  [V, W] (x) = DW(x) (V(x)) - DV(x) (W(x)).
\end{equation*}
It encodes the commutation default between the flows of $V$ and $W$. This
definition is invariant under local diffeomorphisms, hence it extends to
manifolds. To check the invariance, an order $2$ term comes out, of the form
\begin{equation}
\label{eq:cancel_symm}
  D^2 f(x) (V(x), W(x)) - D^2 f(x) (W(x), V(x)).
\end{equation}
It vanishes thanks to the symmetry of the second derivative. In the same way,
the relation $\dd \circ \dd = 0$ on differential forms ultimately boils down
to a computation relying on the symmetry of the iterated derivative.

We have formalized the symmetry of the second derivative, in a slightly more
general version as we allow convex domains:
\begin{lstlisting}
theorem Convex.second_derivative_within_at_symmetric
    (s_conv : Convex ℝ s) (hne : (interior s).Nonempty)
    (hf : ∀ x ∈ interior s, HasFDerivAt f (f' x) x) (xs : x ∈ s)
    (hx : HasFDerivWithinAt f' f'' (interior s) x) (v w : E) :
    f'' v w = f'' w v
\end{lstlisting}

This extends readily from real vector spaces to complex vector spaces, since
one can see a complex vector space as a real vector space. To cover also
$p$-adic manifolds, we would like to extend this property to other normed
fields, but unfortunately the symmetry of the second derivative fails in
general: the proof over the reals uses the mean value inequality, which does
not hold in discontinuous settings. We will describe a counterexample, which
is essentially~\cite[Corollary~84.2]{ultrametric_calculus}. The following
proposition shows that the derivative of a function on $\Q_p^2$ can be
essentially arbitrary, from which one can get non-symmetric second
derivatives. For simplicity, we work in the open subset $\Z_p$ of $\Q_p$,
where $p$ is any fixed prime number. The reader not familiar with $p$-adic
numbers may safely skip this argument.
\begin{prop}
\label{prop:non_symm} Let $p$ be a prime number. For $x, y \in \Z_p^2$, let
$M (x, y)$ be a matrix from $\Q_p^2$ to $\Q_p$ depending continuously on
$(x,y)$. Then there exists $f : \Z_p^2 \to \Q_p$ a function of class $C^1$
whose derivative is $M$.
\end{prop}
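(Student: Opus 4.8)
The plan is to construct $f$ by hand as a uniformly convergent series $f = \sum_{k \ge 0} f_k$, where $f_k$ accounts for the variation of $M$ at scale $p^{-k}$. The whole argument rests on one feature of the $p$-adic setting that has no real counterpart: $\Z_p^2$ has a basis of \emph{clopen} balls, so a \emph{locally constant} matrix field admits an exact antiderivative with no compatibility condition to verify (a clopen ball has no boundary), whereas over $\R$ one would need $M$ to be a closed $1$-form.

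First I would discretize. Since $\Z_p^2$ is compact and $M$ is continuous, $M$ is bounded and uniformly continuous; write $\omega(p^{-k})$ for its modulus of continuity at scale $p^{-k}$, so $\omega(p^{-k}) \to 0$. For each $k$, partition $\Z_p^2$ into the $p^{2k}$ clopen balls $B = (a,b) + p^k\Z_p^2$ with $(a,b) \in \{0, \dots, p^k - 1\}^2$, and let $M_k$ be the locally constant matrix field equal on each such $B$ to the value of $M$ at the center of $B$; then $\norm{M_k - M}_\infty \le \omega(p^{-k})$. Put $N_0 = M_0$ and $N_k = M_k - M_{k-1}$ for $k \ge 1$, so that $\norm{N_k}_\infty \to 0$ and $\sum_{k=0}^{j} N_k = M_j$.

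Next I would set $f_k \colon \Z_p^2 \to \Q_p$ to be affine on each ball $B$ of the scale-$p^{-k}$ partition, namely $f_k(x,y) = N_k|_B \cdot \bigl((x,y) - \mathrm{center}(B)\bigr)$. Because each $B$ is clopen, $f_k$ is continuous, and at every point of $B$ its Fréchet derivative equals the constant row matrix $N_k|_B$. Since $\norm{f_k}_\infty \le p^{-k} \norm{N_k}_\infty \to 0$, completeness of $\Q_p$ together with the ultrametric inequality makes $f = \sum_k f_k$ converge uniformly to a continuous function. To compute its derivative, fix $z \in \Z_p^2$ and an increment $h$ with $\abs{h} = p^{-j}$. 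For $k \le j$, the points $z$ and $z + h$ lie in a common scale-$p^{-k}$ ball, so $f_k(z + h) - f_k(z) = N_k(z) \cdot h$, and telescoping gives $\sum_{k \le j} \bigl(f_k(z+h) - f_k(z)\bigr) = M_j(z) \cdot h$. For $k > j$, the ultrametric bound $\abs{f_k(w)} \le p^{-k} \norm{N_k}_\infty$, valid at every $w$, gives $\bigl| \sum_{k > j} (f_k(z+h) - f_k(z)) \bigr| \le p^{-j} \sup_{k > j} \norm{N_k}_\infty$. Therefore
\begin{equation*}
  f(z + h) - f(z) - M(z) \cdot h = \bigl(M_j(z) - M(z)\bigr) \cdot h + R, \qquad \abs{R} \le p^{-j} \sup_{k > j} \norm{N_k}_\infty,
\end{equation*}
and both terms on the right are $o(\abs{h})$ as $\abs{h} \to 0$, because $\norm{M_j - M}_\infty \to 0$ and $\norm{N_k}_\infty \to 0$. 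Hence $f$ is differentiable at every $z$ with derivative $M(z)$; as $M$ is continuous, this exhibits $f$ as a $C^1$ function with $Df = M$.

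I expect the only genuine work to be bookkeeping: choosing representatives for the balls, carrying out the telescoping, and splitting the series at the scale $\abs{h}$ — all routine once the picture is set. The one essential point, and the reason the statement holds at all, is that over $\Z_p$ a locally constant matrix field integrates exactly on every clopen ball, with no gluing obstruction of the kind that forces closedness over $\R$.
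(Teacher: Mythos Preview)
Your proof is correct and is essentially the paper's construction in different packaging: the paper defines $f$ recursively on ball-centers by $f(a_{n+1})=f(a_n)+M(a_n)\cdot(a_{n+1}-a_n)$ and extends by continuity, and an Abel summation shows that your series $\sum_k f_k$ produces the identical function. The derivative arguments are likewise parallel---both split at the scale $\abs{h}$ and control the tail via the uniform continuity of $M$.
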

Note that the analogous result is completely false on $\R$, since the
integral of the derivative along a loop from $0$ to $0$ must be zero, so
everything is much more rigid.

The above proposition allows us to construct a counterexample to the symmetry
of second derivatives. Let $M(x, y) = (0, x)$. Then the corresponding
function $f$ has $M$ as its derivative, which is $C^\infty$, so it is itself
$C^\infty$. Furthermore, $\partial f/\partial x = 0$, so $\partial^2
f/\partial y\partial x = 0$, while $\partial f/\partial y = x$, so
$\partial^2 f/\partial x\partial y=1$.

\begin{proof}
A point in $\Z_p$ is determined by its successive reductions in $\Z/p^n \Z$:
at each stage, we refine by adding the data of the $n+1$-th digit.
Equivalently, $\Z_p$ is a union of $p^n$ balls of radius $p^{-n}$, each ball
being the union of $p$ balls of radius $p^{-(n+1)}$. The same applies in
$\Z_p^2$, except that there are $p^{2n}$ balls of radius $p^{-n}$. Let us
choose a center for each ball (the canonical choice is to take the point of
the ball whose subsequent digits are all zero), and let $C_n$ be the set of
centers of balls of depth $n$, with $C_n \subseteq C_{n+1}$.

We define $f$ recursively on $C_n$, starting from $f(0) = 0$. Suppose $f$ is
defined on $C_n$. For $b\in C_{n+1}$, let $a\in C_n$ be in the same ball of
radius $p^{-n}$. We then set $f(b) = f(a) + M(a) \cdot (b-a)$. Note that for
$a=b$, both sides give the same formula, so this inductive definition is
consistent. Furthermore, $\norm{f(b) - f(a)} \leq K \norm{b-a} \leq K p
^{-n}$. The summability of this series ensures that the function $f$ thus
defined on the increasing union of the $C_n$ is uniformly continuous.
Therefore, it extends uniquely to a function on $\Z_p^2$, which we again
denote by $f$.

It remains to be shown that $f$ is differentiable at each point $a$, with
derivative $M(a)$. Let $a_n \in C_n$ be the sequence of centers of the balls
containing $a$, and similarly for a point $b$. Then
\begin{equation*}
  f(b) - f(a) = \sum M(b_n) \cdot (b_{n+1} - b_n) - \sum M(a_n) (a_{n+1}-a_n).
\end{equation*}
Suppose that $\norm{b-a} = p^{-(N+1)}$. Then $a_n = b_n$ for $n\le N$, so
that we have
\begin{equation*}
  f(b) - f(a) = \sum_{n \ge N} M(b_n) \cdot (b_{n+1} - b_n) - \sum_{n\ge N} M(a_n) (a_{n+1}-a_n).
\end{equation*}
Let $\epsilon>0$. If $N$ is large enough, the two matrices $M(b_n)$ and
$M(a_n)$ are $\epsilon$-close to $M(a)$, by continuity of $M$. Therefore,
$f(b) - f(a)$ equals
\begin{align*}
\\& M(a) \sum_{n\ge N} (b_{n+1} - b_n) - M(a) \sum_{n\ge N} (a_{n+1}-a_n) \pm 2\epsilon \sum_{n\ge N} p^{-n}
  \\& = M(a) (b- b_N) - M(a) (a-a_N) \pm 2 \epsilon p^{-N} / (1-p^{-1})
  \\& = M(a) (b - a) \pm 2 \epsilon \norm{b-a} p / (1-p^{-1}).
\end{align*}
We have shown that, as soon as $b$ is close enough to $a$, we have for $C =
2p / (1-p^{-1})$ the estimate
\begin{equation*}
  \norm{f(b) - f(a) - M(a) (b-a)} \leq C \epsilon \norm{b-a}.
\end{equation*}
This is the definition of differentiability at $a$ with derivative $M(a)$.
\end{proof}

The construction is sufficiently explicit that we can write everything down.
For the counterexample to the symmetry of second derivatives, with $M(x,y) =
(0,x)$, the function $f$ reads
\begin{equation*}
  f \pare*{\sum x_n p^n, \sum y_n p^n} =\sum_{k<l} x_k y_l p^{k+l}.
\end{equation*}

The above pathology can not happen for analytic maps: if $f$ is locally given
by a series $f(x + y) = \sum_n p_n(y, \dotsc, y)$ as in
Definition~\ref{defn:analytic}, then one can compute the iterated derivative
of $f$ as follows: one has
\begin{equation}
\label{eq:deriv_analytic}
  D^n f(x) (v_1, \dotsc, v_n) = \sum_{\sigma \in \mathfrak{S}_n} p_n(v_{\sigma(1)}, \dotsc, v_{\sigma(n)})
\end{equation}
where $\mathfrak{S}_n$ denotes the set of permutations of $\{1, \dotsc, n\}$.
This expression is obviously symmetric in the $v_i$. In particular,
\begin{equation*}
D^2 f(x) (v_1, v_2) = p_2(v_1, v_2) + p_2(v_2, v_1).
\end{equation*}

\medskip

It follows from this discussion that symmetry of second derivatives of $C^n$
functions holds when $n \ge 2$ if the field is $\R$ or $\C$, and for
$n=\omega$ otherwise. All advanced calculus on vector fields or differential
forms is only possible under such smoothness conditions. This is the reason
why, over fields different from $\R$ or $\C$, Bourbaki only considers
analytic manifolds in~\cite{bourbaki_diff}.

We formalize the above formula for the iterated derivative of an analytic
function:
\begin{lstlisting}
theorem HasFPowerSeriesWithinOnBall.iteratedFDerivWithin_eq_sum
    (h : HasFPowerSeriesWithinOnBall f p s x r)
    (h' : AnalyticOn 𝕜 f s)
    (hs : UniqueDiffOn 𝕜 s) (hx : x ∈ s) (v : Fin n → E) :
    iteratedFDerivWithin 𝕜 n f s x v
      = ∑ σ : Perm (Fin n), p n (fun i ↦ v (σ i))
\end{lstlisting}

Let us now state that the second derivative is symmetric under a smoothness
condition that depends on the field. Let \lstinline{minSmoothness 𝕜 n} be
equal to $n$ if the field is isomorphic as a normed field to $\R$ or $\C$,
and $\omega$ otherwise. The general formalized version of the second
derivative symmetry is the following:
\begin{lstlisting}
theorem ContDiffAt.isSymmSndFDerivAt (v w : E)
    (hf : ContDiffAt 𝕜 n f x) (hn : minSmoothness 𝕜 2 ≤ n) :
    fderiv 𝕜 (fderiv 𝕜 f) x v w = fderiv 𝕜 (fderiv 𝕜 f) x w v
\end{lstlisting}

This design choice with a function \lstinline{minSmoothness} makes it
possible to give statements in a unified way. For instance, the fact that the
pullback of vector fields commutes with the Lie bracket reads as follows:
\begin{lstlisting}
lemma pullback_lieBracket
    (hn : minSmoothness 𝕜 2 ≤ n) (h'f : ContDiffAt 𝕜 n f x)
    (hV : DifferentiableAt 𝕜 V (f x))
    (hW : DifferentiableAt 𝕜 W (f x)) :
    pullback 𝕜 f (lieBracket 𝕜 V W) x =
      lieBracket 𝕜 (pullback 𝕜 f V) (pullback 𝕜 f W) x
\end{lstlisting}
This statement is proved first without the \lstinline{hn} assumption,
requiring instead that the second derivative of $f$ at $x$ is symmetric to
cancel out the terms~\eqref{eq:cancel_symm}. The theorem
\lstinline{ContDiffAt.isSymmSndFDerivAt} is then applied to show that the
symmetry is satisfied under \lstinline{hn}.

As an example of application, we can construct the Lie algebra of Lie groups
over arbitrary fields. A Lie group $G$ is a manifold with a multiplication
and an inversion which are $C^n$ for some $n$. Given two elements $v$ and $w$
of the tangent space $\frakg$ at the identity, one can form the corresponding
left-invariant vector fields $V$ and $W$ by transporting the vectors
everywhere in the group by left-multiplication. The value at the identity of
the Lie bracket $[V,W]$ of these vector fields is a new vector in $\frakg$,
called the (group) Lie bracket of $v$ and $w$. This endows $\frakg$ with a
Lie algebra structure, as the Lie bracket of vector fields satisfies the
required properties (linearity, antisymmetry, Jacobi identity). All these
properties are checked by direct computations in vector spaces, and
transported to the manifold since the pullback of the Lie bracket of vector
fields under local diffeomorphisms is well defined as explained above. This
requires some amount of smoothness, depending on the base field: over $\R$ or
$\C$, we need $C^3$ for this construction, while over other fields $C^\omega$
is needed. The formalized version is as follows:
\begin{lstlisting}
instance instLieAlgebraGroupLieAlgebra
    {I : ModelWithCorners 𝕜 E H} [TopologicalSpace G]
    [ChartedSpace H G] [Group G]
    [LieGroup I (minSmoothness 𝕜 3) G] [CompleteSpace E] :
    LieAlgebra 𝕜 (GroupLieAlgebra I G)
\end{lstlisting}
The details of the manifold or Lie group implementations can be safely
ignored by the reader here. The main point in this instance is the assumption %
\lstinline{LieGroup I (minSmoothness 𝕜 3) G} requiring that multiplication and
inversion are $C^3$ if the field is $\R$ or $\C$ and analytic otherwise, in a
unified way.

\begin{rmk}
There are other approaches to the Lie algebra of a Lie group. For instance,
it is straightforward to construct a Lie algebra structure on the space of
left-invariant derivations, i.e., maps $D$ from $C^n$ functions to $\R$
verifying $D(fg) = f(1) D(g) + g(1) D(f)$): one sets $[D,D'] f = D(D' f) -
D'(D f)$, and it follows from the derivation property that this Lie bracket
is again a derivation. In $C^\infty$ finite-dimensional real manifolds, there
is a canonical identification between the tangent space and the space of
derivations on $C^\infty$ functions. Therefore, in this specific case, this
constructs the Lie algebra structure on the tangent space at the identity of
a Lie group. This is the approach taken in~\cite{isabelle_liegroup}. However,
it fails over other fields, or for smoothness classes other than $C^\infty$,
as the identification between the tangent space and the space of derivations
fails in this generality. This approach is therefore not suitable for the
generality level of \mathlib.
\end{rmk}

\section{Conclusion}

We have described the formalization of the scale of smooth functions in the
Lean mathematical library \mathlib. It defines $C^n$ functions for $n \in \N
\cup \{\infty, \omega\}$, and departs in several ways from standard textbooks
definition. The design is motivated by the applications, notably to various
kinds of manifolds, that should be covered by the definition. The three main
constraints are that we should allow $C^n$ functions on domains, with respect
to fields which can be more general than $\R$ or $\C$, with a hierarchy
containing analytic functions. We have shown through several mathematical
examples that naive definitions taking these constraints into account would
fail natural properties that a good scale of $C^n$ functions should have,
forcing us to have a more elaborate definition.

The current definition is the result of a refactoring process that has been
ongoing for several years: whenever an application could not be implemented,
the definition had to be modified. The integration of analytic functions into
the hierarchy of $C^n$ functions, in particular, is motivated by the desire
to be able to formalize Lie groups over arbitrary fields, with regularity at
least $C^3$ over $\R$ or $\C$ and with analytic regularity otherwise. This
application has been completely formalized, which attests to the
effectiveness of the current definition, particularly in terms of providing
unified formulations of definitions and results across the entire smoothness
class.

We have not covered in this text the applications of $C^n$ functions to
analysis, although they are also very important. Let us just mention that the
Schwartz space of superpolynomially decaying $C^\infty$ functions is endowed
in \mathlib with its locally convex topological vector space structure by
constructing seminorms involving the norms of the iterated derivatives at
various points. Moreover, we have formalized the fact that the Fourier
transform is an isomorphism of the Schwartz space, and the Fourier inversion
formula. These results rely on many of the properties of $C^n$ functions
described in this paper.

\bibliography{biblio}
\bibliographystyle{amsalpha}
\end{document}